\documentclass[
onecolumn,
showpacs,preprintnumbers,amsmath,amssymb,aip,jmp,superscriptaddress,footinbib
]{revtex4-1}

\usepackage{amsmath, amsthm, amssymb, amsfonts, amscd, graphicx, verbatim}
\usepackage{subfigure}
\setlength{\oddsidemargin}{-0.15in}
\setlength{\topmargin}{-0.5in}
\setlength{\textheight}{9in}
\setlength{\textwidth}{6.5in}
\newtheorem{theorem}{Theorem}[section]

\newtheorem{proposition}[theorem]{Proposition}
\newtheorem{corollary}[theorem]{Corollary}

\newtheorem{example}[theorem]{Example}

\def\IR{{\mathbb R}}
\def\IC{{\mathbb C}}

\def\cS{{\cal S}}

\def\tr{{\rm tr}}
\def\rank{{\rm rank}\,}

\newcommand{\bra}[1]{\mbox{$\left\langle #1 \right|$}}
\newcommand{\ket}[1]{\mbox{$\left| #1 \right\rangle$}}

\begin{document}
\openup 1\jot
\title{
Entanglement transformation between sets of bipartite pure quantum states using local operations
}

\author{H.~F. Chau}
\affiliation{Department of Physics and Center of Computational and Theoretical
 Physics, University of Hong Kong, Pokfulam Road, Hong Kong}
\author{Chi-Hang Fred Fung}
\affiliation{Department of Physics and Center of Computational and Theoretical
 Physics, University of Hong Kong, Pokfulam Road, Hong Kong}
\author{Chi-Kwong Li}
\affiliation{Department of Mathematics, College of William \& Mary, Williamsburg, Virginia 23187-8795, USA}
\author{Edward Poon}
\affiliation{Department of Mathematics, Embry-Riddle Aeronautical University, 3700 Willow Creek Road, Prescott, Arizona 86301, USA}
\author{Nung-Sing Sze}
\affiliation{Department of Applied Mathematics, The Hong Kong Polytechnic University, Hung Hom, Hong Kong}


\begin{abstract}
Alice and Bob are given an unknown initial state chosen from a set of pure quantum states.
Their task is to transform the initial state to a corresponding final pure state using local operations only.
We prove necessary and sufficient conditions on the existence of such a transformation.
We also provide efficient algorithms that can quickly rule out the possibility of transforming a set of initial states to a set of final states.
\end{abstract}

\date{\today}

\pacs{03.67.-a, 03.67.Bg, 03.67.Lx, 03.65.Aa}

\maketitle


\section{Introduction}

Entanglement is a useful resource in quantum information processing.
Entanglement enables two parties to share non-classical correlation and
is the basis for unconditional security in
quantum key distribution~\cite{PhysRevLett.67.661,PhysRevLett.68.557,LoChauQKD_99},
exponential speedup in
quantum computing~\cite{Deutsch1985,Shor1994,DiVincenzo1995,Lloyd1996},
and
error tolerance in computing with
quantum error correction codes~\cite{Shor1996,Cirac1996,PhysRevA.57.127}.
In this paper, we study a simple problem of transformation of entangled states,
which is perhaps the simplest non-trivial case of studying the state transformation problem in the
multi-partite setting.
Alice and Bob are initially given an unknown bipartite pure state
from the set
$\{\ket{x_i}_{AB} \in \mathcal{H}_A\otimes \mathcal{H}_B, i=1,\ldots,N\}$, where $\mathcal{H}_A$ ($\mathcal{H}_B$) denotes the Hilbert space of Alice's (Bob's) quantum system with dimension $m$ ($n$).
Their task is to transform the initial state to a corresponding state
in the set 
$\{\ket{y_i}_{AB} \in \mathcal{H}_A\otimes \mathcal{H}_B, i=1,\ldots,N\}$ 
by performing local quantum operations only without any communication with each other.
Any quantum operation on a density matrix $\rho$ can be represented by a trace preserving completely positive (TPCP) map which can be expressed in the operator-sum form $\mathcal{F}(\rho)=\sum_i F_i \rho F_i^\dag$ where $\sum_i F_i^\dagger F_i=I$.
We prove necessary and sufficient conditions for the existence of a transformation
$(\mathcal{F} \otimes \mathcal{G}) (\ket{x_i}_{AB}\bra{x_i})=\ket{y_i}_{AB}\bra{y_i}$ for all $i=1,\ldots,N$,
where Alice (Bob) performs physical process $\mathcal{F}$ ($\mathcal{G}$) on
 $\mathcal{H}_A$ ($\mathcal{H}_B$).
%
%
In other words,
we are interested in the existence of
a local operation (LO) transformation $T$ of the form
\begin{equation}\label{form}
T(\rho_{AB}
) = \sum_{1\le i\le p, 1 \le j \le q} (F_i \otimes G_j)
\rho_{AB}
(F_i^\dag \otimes G_j^\dag)
\end{equation}
such that
$$T (\ket{x_i}_{AB}\bra{x_i})=\ket{y_i}_{AB}\bra{y_i} \text{ for all } i=1,\ldots,N,$$
where
$F_1, \dots, F_p \in M_m$ and
$G_1, \dots, G_q \in M_n$ are Kraus operators satisfying
$\sum F_i^\dag F_i = I$ and $\sum G_j^\dag G_j = I$.
%
Here, $p$ and $q$ are the dimensions of the ancillas for the channels $\mathcal{F}$ and $\mathcal{G}$.
Figure~\ref{fig:transform_state} depicts the bipartite state transformation problem that we consider in this paper.

The state transformation problem has been studied since the early 1980s.
Alberti and Uhlmann~\cite{Alberti1980163} proved necessary and sufficient conditions for the existence of a physical process that transforms two qubit (mixed) states to two other qubit (mixed) states.
Subsequent works found necessary and sufficient conditions for transformations between two sets of pure states without any restriction on the number of states~\cite{Uhlmann1985,Chefles200014,PhysRevA.65.052314,Chefles_Jozsa_Winter_2004}.
Note that these previous results assume that the physical process acts on the entire Hilbert space the states live in (i.e., they do not impose a bipartite structure on the transformation as we do here).
The pure-state single-party result~\cite{Uhlmann1985,Chefles200014,PhysRevA.65.052314,Chefles_Jozsa_Winter_2004} states that a TPCP map $T$ exists such that
$T(\ket{x_i}_{AB}\bra{x_i})=\ket{y_i}_{AB}\bra{y_i}$
for $i = 1, \dots, N$ if and only if
there is a correlation matrix (positive semidefinite matrix with all diagonal entries equal to one)
$M \in M_N$ such that the Gram matrix
$(\langle x_i \ket{x_j})$
equals
the Schur (entrywise) product $M\circ (\langle y_i \ket{y_j})$.

Along a different line of research, transformation from one bipartite entangled pure state to another using local operations and classical communications (LOCC) 
has been studied for a long time.
In particular, 
Bennett {\it et al.}~\cite{Bennett1996} as well as Lo and Popescu~\cite{Lo:2001:concentration} studied the transformation of a bipartite state to a maximally entangled state using LOCC.
In fact,
Lo and Popescu~\cite{Lo:2001:concentration} showed that any general  transformation between bipartite pure states using LOCC can be performed with one-way classical communications only.
Along a similar line,
Nielsen~\cite{PhysRevLett.83.436} studied the transformation problem between two general bipartite states and 
proved
that $\ket{x}_{AB}$ transforms to $\ket{y}_{AB}$ using LOCC if and only if the eigenvalues of $\operatorname{tr}_B(\ket{x}_{AB}\bra{x})$ are majorized by those of $\operatorname{tr}_B(\ket{y}_{AB}\bra{y})$.
Jonathan and Plenio~\cite{PhysRevLett.83.1455} extended this result to the case where the transformation is allowed to output one of many final states with a certain probability.
Again, a majorization condition has been proven for this case.
He and Bergou~\cite{He2008} further proposed a scheme to transform one bipartite state to another probabilistically.
Gheorghiu and Griffiths~\cite{Gheorghiu2008} considered transformation using separable operations (of which LOCC is a subset) and proved the same majorization condition for transformability as that in Ref.~\citenum{PhysRevLett.83.1455}.
Note that these results concern the transformation of only one bipartite state to another bipartite state.

In this paper, we study the combination of the previous two research directions by considering the transformation of sets of bipartite states.
Our result shares some characteristic with the single-state bipartite result~\cite{PhysRevLett.83.436} in that singular (or eigen) values of the initial and final states play an important role.
In particular, we can immediately rule out the existence of the desired transformation by looking at the singular (or eigen) values in both cases.
On the other hand, while the eigenvalues alone are sufficient to determine whether a LOCC map exists to transform one bipartite state to another~\cite{PhysRevLett.83.436},
the eigenvalues alone are not sufficient when we consider the LO transformation of a set of bipartite states to another.
In summary, we consider in this paper the transformability problem in a new context that was only partially considered by previous results, and as discussed our necessary and sufficient conditions are different from those in the other contexts studied previously.

We remark that, along a different direction, some of us have studied the problem of transforming between two sets of {\it mixed} states (with a single-party TPCP map, instead of a two-party map using LO or LOCC), and obtained necessary and sufficient conditions for the existence of such a transformation~\cite{Huang2012}.

%
%
%
%
%

\begin{figure}
\begin{center}
\subfigure[]{
\includegraphics[width=.5\columnwidth]{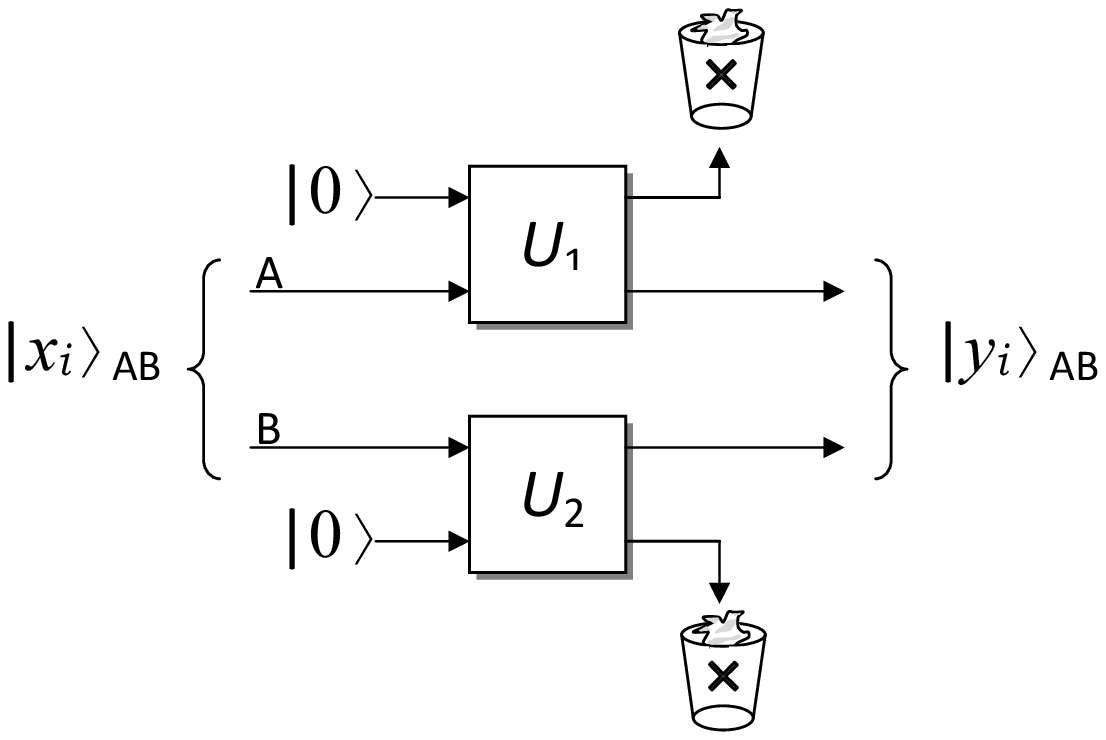}\label{fig:transform_state}}
\\
\subfigure[]{
\includegraphics[width=1\columnwidth]{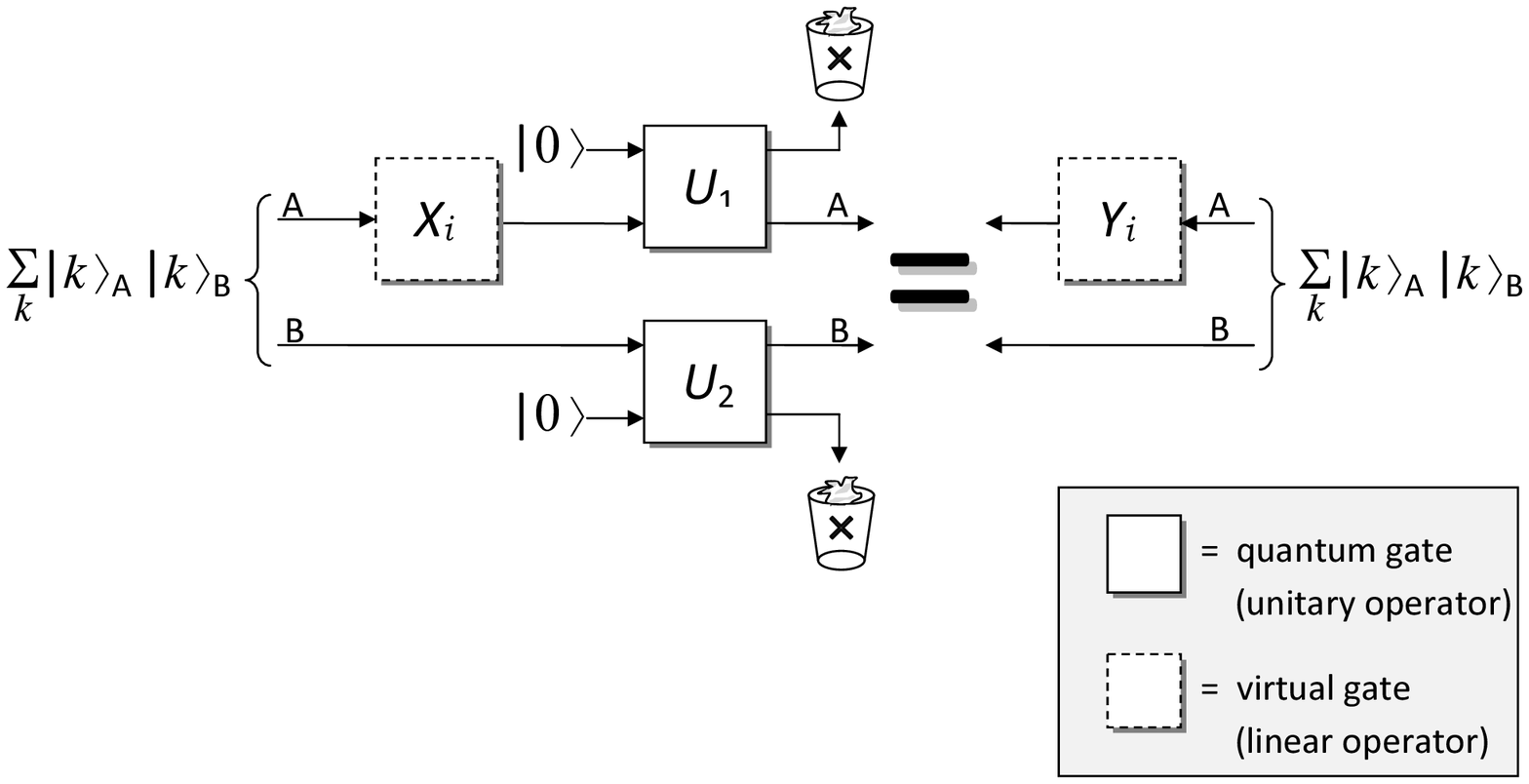}\label{fig:transform_linear_operator}}
\caption{\label{fig:DPA}
The bipartite state transformation problem.
(a) State representation of the problem: The input states $\ket{x_i}_{AB}$ are transformed to the corresponding output states $\ket{y_i}_{AB}$ for all $i=1,\dots,N$ through a LO transformation $U_1 \otimes U_2$.
The ancillas are prepared in the standard states and are discarded after the transformation.
(b) Gate representation of the problem: The input states $\ket{x_i}_{AB}$ (output states $\ket{y_i}_{AB}$) are mapped to linear operators $X_i$ ($Y_i$).
The LO transformation $U_1 \otimes U_2$ maps the input ``gate'' $X_i$ to the output ``gate'' $Y_i$ for all $i$.
This gate representation is useful as it can succinctly capture our necessary and sufficient conditions for the existence of the LO transformation.
}
\end{center}
\end{figure}

\subsection{State representation}

We always work in some fixed orthonormal bases
$\{ \ket{i}_A : i=1,\dots,m\}$ and $\{ \ket{j}_B: j=1,\dots,n \}$ on Alice's and Bob's systems respectively throughout the paper.
Thus,
a joint state of Alice and Bob
$\ket{x}_{AB}=\sum_{i=1,\dots m; j=1,\dots n} \lambda_{ij} \ket{i}_A \ket{j}_B$
(where  $\sum_{ij} |\lambda_{ij}|^2=1$)
can be expressed
as a column vector
\begin{equation}
\label{eqn-vector-form1}
x=
\begin{bmatrix}
\lambda_{11}\\
\lambda_{12}\\
\vdots\\
\lambda_{mn}
\end{bmatrix}
\in \IC^{mn} .
\end{equation}

For any state $\ket{x}_{AB}=\sum_{i=1,\dots,m; j=1,\dots,n} \lambda_{ij} \ket{i}_A \ket{j}_B$, we define an associated linear operator
$X=\sum_{i=1,\dots,m; j=1,\dots,n} \lambda_{ij} \ket{i} \bra{j}$, which
is conveniently represented as an $m \times n$ matrix with the $(i,j)$ element being $\lambda_{ij}$:
\begin{equation}
\label{eqn-matrix-form1}
X=
\begin{bmatrix}
\lambda_{11} & \lambda_{12} & \dots & \lambda_{1n} \\
\lambda_{21} & \ddots & \dots & \lambda_{2n} \\
\vdots & & & \vdots \\
\lambda_{m1} & & \dots & \lambda_{mn}
\end{bmatrix}.
\end{equation}
Note that here we are using the well known correspondence between vectors and matrices: $x=\operatorname{vec}(X^t)$ (see, e.g., Ref.~\citenum{Horn1994}).
The $\operatorname{vec}$ operator has a property that relates multiplication and tensor product: $\operatorname{vec}(AXB)=(B^t \otimes A) \operatorname{vec}(X)$.
This property will be useful in Theorem~\ref{2.1} later.
In order to simplify notations in the rest of this paper, whenever we refer to the correspondence between a vector and a matrix, they are related by Eqs.~\eqref{eqn-vector-form1} and
\eqref{eqn-matrix-form1}, with a lower-case (upper-case) letter representing the vector (matrix).

We slightly abuse the notation by calling $X$ a virtual quantum gate that changes the space of the state it acts on from $n$-dimensional to $m$-dimensional.
It turns out that our main results and proofs can be conveniently expressed in terms of $X_i$
and
$Y_i$
in addition to the original states $\ket{x_i}_{AB}$ and $\ket{y_i}_{AB}$, where $i=1,\dots,N$.
First note that $(X \otimes I_B) \sum_{j=1,\dots,n} \ket{j}_{A} \ket{j}_{B}=\ket{x}_{AB}$ for any
$\ket{x}_{AB}$.
This means that the same input state $\sum_{j} \ket{j} \ket{j}$ (which is an unnormalized maximally entangled state) can always be transformed to any given state using its associated linear operator.
Then the state transformation problem can be visualized as finding a quantum circuit ($U_1$ and $U_2$) that preserves the state $\sum_{j} \ket{j} \ket{j}$ processed by virtual gates $X_i$ and $Y_i$ where $i=1,\dots,N$ (see Fig.~\ref{fig:transform_linear_operator}).
Even though $X_i$ and $Y_i$ are not ordinary quantum gates because they are not unitary, they still transform states as linear operators and so we can still capture them in a quantum-circuit-like diagram.
As we show below, the gate representation with $X_i$ and $Y_i$ is useful as it can capture our main results succinctly.

Using these notations, 
we have
$\tr_B \ket{x}_{AB}\bra{x}=XX^\dag$.

\subsection{Notations}

In what follows, we adopt the mathematicians' notation that the superscript $^*$ denotes conjugate transpose and $^t$ denotes transpose.
Thus, $x x^*$ denotes $\ket{x}\bra{x}$.

\section{Main results}

\begin{theorem} \label{2.1}
Given pure states
$x_1x_1^*, \dots, x_kx_k^*, y_1y_1^*,\dots, y_ky_k^* \in H_{mn}$, let
$X_i$ and $Y_i$ be the $m\times n$ matrix forms of $x_i$ and $y_i$ for $i = 1, \dots, k$.
Let $E_{11}$ be the $p\times q$ matrix with 1 at the $(1,1)$ position and 0 elsewhere.
The following conditions are equivalent.

{\rm (a)} There is a TPCP map $T$ of the form (\ref{form}) on tensor states such that
$T(x_ix_i^*) = y_iy_i^*$ for $i = 1, \dots, k$.

{\rm (b)} There is a unitary $U = U_1\otimes U_2
\in M_{mp} \otimes M_{nq}$ such that
$$U\begin{bmatrix} \tilde x_1 & \cdots \tilde x_k \end{bmatrix} =
\begin{bmatrix} \tilde{y}_1 & \cdots & \tilde{y}_k \end{bmatrix},
$$
where $\tilde x_i$ is the vector form of $E_{11} \otimes X_i$,
$\tilde{y}_i$ is the vector form of $R_i \otimes Y_i$ and $R_1, \dots, R_k \in M_{pq}$ have Frobenius norm one.

{\rm (c)} There are unitary $U\in M_{mp}$, $V \in M_{nq}$,
and matrices $R_1, \dots, R_k \in M_{pq}$
such that
$$\tr R_iR_i^* = 1 \quad \hbox{ and } \quad
U (E_{11} \otimes X_i)V = R_i\otimes Y_i,  \qquad  i = 1, \dots, k.$$
\end{theorem}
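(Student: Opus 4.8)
The plan is to prove the three conditions equivalent using two tools already available: the Stinespring-type dilation of the channels $\mathcal F,\mathcal G$ into ancillas of dimensions $p$ and $q$, and the $\operatorname{vec}$ identity relating multiplication and tensor product quoted just before the theorem. I would first dispose of (b)$\Leftrightarrow$(c), which is a purely formal translation: recalling that Alice's and Bob's unitaries act on the matrix form as $(U_1\otimes U_2)x \leftrightarrow U_1XU_2^t$, the matrix equation of (c), namely $U(E_{11}\otimes X_i)V = R_i\otimes Y_i$, passes through the $\operatorname{vec}$ operator to the vector equation $(U_1\otimes U_2)\tilde x_i = \tilde y_i$ of (b) under the correspondence $U_1 = U$ and $U_2 = V^t$. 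A product unitary arises exactly when $U,V$ are unitary and conversely, and the normalization $\operatorname{tr} R_iR_i^* = \|R_i\|_F^2 = 1$ is common to both; so (b) and (c) are two forms of the same statement. It then remains only to connect (c) with (a).

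For (a)$\Rightarrow$(c), write the composite Kraus operators as $A_{ab} = F_a\otimes G_b$, so that $y_iy_i^* = T(x_ix_i^*) = \sum_{a,b}(A_{ab}x_i)(A_{ab}x_i)^*$. Since the right-hand side is a sum of positive semidefinite rank-one terms adding up to something of rank one, each vector $A_{ab}x_i$ must be a scalar multiple of $y_i$, say $A_{ab}x_i = c_{ab,i}\,y_i$, where trace preservation forces $\sum_{a,b}|c_{ab,i}|^2 = 1$. Feeding $(F_a\otimes G_b)x_i = c_{ab,i}\,y_i$ through the $\operatorname{vec}$ identity gives the matrix relation $F_aX_iG_b^t = c_{ab,i}\,Y_i$. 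I would then dilate: the Kraus condition $\sum_a F_a^*F_a = I_m$ says that $\sum_a\ket{a}\otimes F_a$ is an isometry, so there is a unitary $U_1\in M_{mp}$ with $U_1(\ket{1}\otimes I_m) = \sum_a\ket{a}\otimes F_a$, and likewise $U_2\in M_{nq}$ with $U_2(\ket{1}\otimes I_n) = \sum_b\ket{b}\otimes G_b$. Because $E_{11}\otimes X_i$ carries $X_i$ in its $(1,1)$ block and zeros elsewhere, a direct block computation yields
\[
U_1(E_{11}\otimes X_i)U_2^t
= \sum_{a,b}\ket{a}\bra{b}\otimes (F_aX_iG_b^t)
= \Bigl(\sum_{a,b}c_{ab,i}\,\ket{a}\bra{b}\Bigr)\otimes Y_i
= R_i\otimes Y_i ,
\]
with $R_i := \sum_{a,b}c_{ab,i}\,\ket{a}\bra{b}$ obeying $\operatorname{tr} R_iR_i^* = \sum_{a,b}|c_{ab,i}|^2 = 1$. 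Taking $U = U_1$ and $V = U_2^t$ gives (c).

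For (c)$\Rightarrow$(a) I would run this construction backwards. Given the unitary $U$, define $F_a := (\bra{a}\otimes I_m)\,U\,(\ket{1}\otimes I_m)$, and from $V = U_2^t$ define the $G_b$ analogously; unitarity of $U$ together with $\sum_a\ket{a}\bra{a} = I_p$ gives $\sum_a F_a^*F_a = I_m$, and similarly $\sum_b G_b^*G_b = I_n$, so the $\{F_a\otimes G_b\}$ are legitimate Kraus operators of a map $T$ of the form~(\ref{form}). The block computation above read in reverse shows that $U(E_{11}\otimes X_i)V = R_i\otimes Y_i$ forces each $F_aX_iG_b^t$ to be a scalar multiple of $Y_i$, hence each $A_{ab}x_i$ parallel to $y_i$; summing the rank-one terms then recovers $T(x_ix_i^*) = y_iy_i^*$.

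The step I expect to be the main obstacle is (a)$\Rightarrow$(c). Two points require care: first, the deduction that the $A_{ab}x_i$ are collinear must be argued cleanly from the purity (rank-one-ness) of the output, since this is the only place the pure-state hypothesis is genuinely used and it is what supplies the product factor $R_i\otimes Y_i$; and second, the dilation must be set up so that the standard ancilla input produces exactly the factor $E_{11}$, while the tensor-factor orderings and the transpose in $V = U_2^t$ remain consistent with the convention $x = \operatorname{vec}(X^t)$. Once these conventions are fixed, the remaining implications are bookkeeping.
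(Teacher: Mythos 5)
Your proposal is correct and takes essentially the same route as the paper: the paper likewise deduces $(F_i\otimes G_j)x_l = r_{ij}^l y_l$ from the rank-one output (with $\sum_{i,j}|r_{ij}^l|^2=1$ from trace preservation), passes to $F_iX_lG_j^t = r_{ij}^l Y_l$ via the vec correspondence, and completes the block column $[F_1;\dots;F_p]$ and block row $[G_1^t,\dots,G_q^t]$ to unitaries $U,V$ satisfying $U(E_{11}\otimes X_l)V = R_l\otimes Y_l$, obtaining (b) from (c) by taking vector forms exactly as you do with $U_1=U$, $U_2=V^t$. Your only departures are presentational: you split the paper's chain of equivalences into explicit forward and backward implications and phrase the isometry completion in dilation notation.
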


\begin{proof}
Note that, if we express $z \in \IC^{mn}$ as an $m \times n$ matrix $Z$, then for any $F \in M_m$ and $G \in M_n$,
the matrix form of $(F \otimes G)z$ is $FZG^t$.
Thus, there exist $F_1, \dots, F_p \in M_m$ and $G_1, \dots, G_q \in M_n$ with
$\sum F_i^* F_i = I_m$ and $\sum G_j^*G_j = I_n$
such that
\begin{equation*}
\sum_{1\le i\le p, 1 \le j \le q} (F_i \otimes G_j) x_l x_l^* (F_i \otimes G_j)^* = y_l y_l^* \quad \forall \: l=1, \dots, k
\end{equation*}
if and only if
\begin{align*}
&(F_i \otimes G_j) x_l = r_{ij}^l y_l \; \text{ for some scalars } r_{ij}^l
\text{ satisfying } \sum_{1 \leq i \leq p, 1 \leq j \leq q} |r_{ij}^l|^2 = 1 \\
& (\text{and} \quad \sum F_i^* F_i = I_m, \sum G_j^* G_j = I_n) \\
\iff & F_i X_l G_j^t = r_{ij}^l Y_l \text{ with above constraints on } r_{ij}^l, F_i, G_j \\
\iff & \begin{bmatrix} F_1 & * & * & * \\ F_2 & * &* &* \\ \vdots &* &* &* \\ F_p &* &* &* \end{bmatrix} \begin{bmatrix} X_l & 0 & \dots & 0 \\ 0 & 0 & & \\ \vdots & & \ddots & \\ 0 & \dots & \dots & 0 \end{bmatrix} \begin{bmatrix} G_1^t & G_2^t & \dots & G_q^t \\ *&*&*&* \\ *&*&*&* \\ *&*&*&* \end{bmatrix} = \begin{bmatrix} r_{11}^l Y_l & r_{12}^l Y_l & \dots & r_{1q}^l Y_l \\ r_{21}^1 Y_l & r_{22}^l Y_l & & \\ \vdots & & \ddots & \\r_{p1}^l Y_l & \dots & \dots & r_{pq}^l Y_l \end{bmatrix} \\
&\text{with previous constraints}\\
\iff & \exists \text{ unitary } U \in M_{pm}, V \in M_{qn}
\text{ such that } U(E_{11} \otimes X_l)V = R_l \otimes Y_l \quad \forall \: l=1, \dots, k \\
&\text{where } R_l \in M_{pq} \text{ has Frobenius norm one} .
\end{align*}
This shows that (a) and (c) are equivalent.  By taking the vector form of the last equation we obtain equivalence with (b).
\end{proof}

\medskip\noindent
\begin{example}
{\rm

Clearly, the existence of unitaries $U \in M_m$ and $V \in M_n$ (in other words, $p=q=1$, the case without ancilla) such that $UX_i V = Y_i$ for all $i=1, \dots, k$ is a sufficient condition for a TPCP map of the desired form.  It is not, however, necessary.
Fix $s_1 > s_2 > 0$.  Let $X_1 = Y_1 = \begin{bmatrix} s_1 & 0 & 0 & 0 \\ 0 & s_2 & 0 & 0 \end{bmatrix}$.  Let $X_2 = \begin{bmatrix} I_2 & I_2 \end{bmatrix}$ and $Y_2 = \begin{bmatrix} \frac{1}{c} I_2 & \frac{r}{c} I_2 \end{bmatrix}$, where $r, c \in \IR$ satisfy $1+r^2 = 2c^2$ and $|c|<1$.  (For example, $c=1/\sqrt{2}$ and $r=0$ will suffice.)
\medskip

Let $\gamma = \sqrt{1-c^2}/c$ and define $$U = I_2 \quad \text{and} \quad V = \begin{bmatrix} I_2 & 0 & 0 & 0 \\ 0 & rI_2 & \gamma I_2 & r\gamma I_2 \\ 0 & * & * & * \\ 0 & * & * & * \end{bmatrix} \in M_8,$$
where we can complete $V$ to be a unitary matrix (since the first four rows are orthonormal).  Then $U(E_{11} \otimes X_1) V = E_{11} \otimes Y_1$ and $U(E_{11} \otimes X_2) V = R \otimes Y$ where $R= \begin{bmatrix} c & \gamma c \end{bmatrix}$, so there is a tensor TPCP map that will interpolate. (Note $\tr R^*R =1$.)
\medskip

However, if $U \in M_2$ and $V \in M_4$ are unitary matrices such that $UX_1V=Y_1$, then $U$ must be a diagonal unitary and $V$ must have the form $V = \begin{bmatrix} U^* & 0 \\ 0 & W \end{bmatrix}$ for some unitary $W \in M_2$.  But then $UX_2V = \begin{bmatrix} I_2 & UW \end{bmatrix} \ne Y_2.$

\medskip
To be more specific, in the above example, let  $s_1 = 2s_2 = 2/3$, $c = 1/\sqrt 2$, and $r = 0$.
If $\{e_1, \dots, e_{16}\}$ is the standard basis  for $\IC^{16}$,
then for any choice of unitary $W \in M_{16}$ with the first, third, sixth, and eighth
columns equal to $e_1, e_9, e_6$, and $e_{14}$, respectively, we have
$W[x_1 \, x_2]= [y_1 \, y_2]$, where $x_1 = y_1 = s_1 e_1 + s_2 e_6$,
$x_2 = e_1 + e_3 + e_6+e_8$, $y_2 = (e_1 + e_6)/c$ are the vector form of
the matrices $X_1=Y_1, X_2, Y_2$.
Clearly, there are many choices of $W$ and not every choice will yield a
matrix of the form $U_1\otimes U_2$ with $U_1 \in M_2$ and $U_2 \in M_8$.

}
\end{example}

\medskip
Condition (c) of Theorem \ref{2.1} shows that it is very difficult to have
an LO operation to transform the states.
For example, one needs to have $R_i$ such that
$E_{11} \otimes X_i$ and $R_i\otimes Y_i$ are unitarily equivalent via the same $U, V$ pair.
In particular, $E_{11}\otimes X_i$ and $R_i \otimes Y_i$
have the same singular values. Thus, the rank of  $X_i$ must be a multiple
of that of $Y_i$. For an individual pair of matrices $X$ and $Y$, this condition
is easy to check as shown in the following proposition.

\begin{proposition} \label{2.2}
Let $x,y \in \IC^{mn}$ be unit vectors and let $X,Y \in M_{mn}$ be their corresponding matrix forms.
There is a tensor TPCP map sending $xx^*$ to $yy^*$ if and only if $X$ and $R \otimes Y$ have the same nonzero singular values for some matrix $R$ with Frobenius norm one.

Furthermore,
suppose $X$ has nonzero singular values $\alpha_1 \geq \dots \geq \alpha_p > 0$ and
$Y$ has nonzero singular values $\beta_1 \geq \dots \geq \beta_q > 0$. Then
the existence of $R$ can be determined by the following algorithm:

Set $A_1 = \{\alpha_1, \dots, \alpha_p \}$.
If $r=p/q$ is not an integer, no map exists.  Otherwise, perform the following.
\begin{align*}
\text{For } & i = 1, \dots, r \\
& \text{Set } \gamma_i = (\max A_i) / \beta_1; \\
& \text{If } \cS_i = \{\beta_1 \gamma_i, \beta_2 \gamma_i, \dots, \beta_q \gamma_i \}
\subset A_i \text{ then set } A_{i+1} = A_i \setminus \cS_i; \\
& \text{otherwise stop: no map exists.}
\end{align*}
If this program finishes with $A_{r+1} = \emptyset$, then take $R$ to be
any matrix with nonzero singular values $\gamma_1, \dots, \gamma_r$,  and a map exists.
(Note that $R$ will automatically have Frobenius norm one.)
\end{proposition}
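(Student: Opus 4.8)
The plan is to prove the two assertions separately: the singular-value criterion is an immediate specialization of Theorem~\ref{2.1}, whereas the algorithm requires verifying the correctness of a greedy-matching procedure.

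First I would derive the criterion from the case $k=1$ of Theorem~\ref{2.1}. Condition (c) there asserts the existence of unitaries $U,V$ and a Frobenius-norm-one matrix $R$ with $U(E_{11}\otimes X)V = R\otimes Y$. Since $E_{11}$ has a single nonzero singular value equal to $1$, the matrix $E_{11}\otimes X$ has exactly the same nonzero singular values as $X$; and because $E_{11}\otimes X$ and $R\otimes Y$ have the same size, the relation $U(E_{11}\otimes X)V = R\otimes Y$ holds for some unitaries $U,V$ precisely when the two matrices share the same singular values, equivalently the same nonzero singular values. Choosing the ancilla dimensions freely (e.g.\ taking $R$ square of size $r$, which can realize any prescribed tuple of nonzero singular values) then yields exactly the stated criterion: a tensor TPCP map exists if and only if $X$ and $R\otimes Y$ have the same nonzero singular values for some $R$ of Frobenius norm one.

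For the algorithm, I would first record that the nonzero singular values of $R\otimes Y$ are precisely the products $\gamma_i\beta_j$, where $\gamma_1,\dots,\gamma_r$ are the nonzero singular values of $R$ and $\beta_1,\dots,\beta_q$ those of $Y$. Hence the criterion becomes: the multiset $\{\alpha_1,\dots,\alpha_p\}$ can be written as $\{\gamma_i\beta_j : 1\le i\le r,\ 1\le j\le q\}$ for some positive reals $\gamma_i$. Comparing cardinalities forces $p=rq$, so $r=p/q$ must be an integer (the algorithm's first check). Moreover, since $x$ and $y$ are unit vectors we have $\sum_l\alpha_l^2=\sum_j\beta_j^2=1$, and then $\sum_i\gamma_i^2=\bigl(\sum_i\gamma_i^2\bigr)\bigl(\sum_j\beta_j^2\bigr)=\sum_{i,j}(\gamma_i\beta_j)^2=\sum_l\alpha_l^2=1$, which explains why $R$ automatically has Frobenius norm one and removes the norm constraint from the search.

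It remains to justify the greedy loop, which is the crux, by an induction on $r$ proving both directions. Ordering the $\gamma_i$ decreasingly, the largest product is $\gamma_1\beta_1$, so in any valid decomposition $\max\{\alpha_l\}=\gamma_1\beta_1$; this \emph{forces} $\gamma_1=(\max A_1)/\beta_1$, exactly the value the algorithm computes. The block of products attached to $\gamma_1$ is then $\cS_1=\{\gamma_1\beta_1,\dots,\gamma_1\beta_q\}$, so if a decomposition exists this block is a sub-multiset of $A_1$ and removing it leaves the multiset corresponding to $\gamma_2,\dots,\gamma_r$, to which the induction applies; conversely, whenever every check $\cS_i\subset A_i$ succeeds and $A_{r+1}=\emptyset$, the removed blocks exhibit $\{\alpha_l\}$ as $\{\gamma_i\beta_j\}$, so $R$ with nonzero singular values $\gamma_1,\dots,\gamma_r$ satisfies the criterion and (by the first part) a map exists. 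The main obstacle is precisely this greedy-correctness step: one must argue that peeling off the block for the current largest singular value is not a heuristic choice but is \emph{forced}, since $\gamma_i=(\max A_i)/\beta_1$ is uniquely determined by the maximum, and one must track multiset multiplicities carefully to ensure that the subtraction $A_{i+1}=A_i\setminus\cS_i$ never discards an element needed for a later block.
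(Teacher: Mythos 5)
Your proposal is correct and follows exactly the route the paper intends: the paper's own proof consists of the remark that the first assertion is the $k=1$ case of Theorem~\ref{2.1} (via the standard fact that $U(E_{11}\otimes X)V = R\otimes Y$ for unitaries $U,V$ is equivalent to equality of singular values) together with ``one readily verifies the algorithm,'' and your write-up simply supplies the details the authors omit. Your forcing argument for the greedy step ($\gamma_i = (\max A_i)/\beta_1$ is determined, multiset subtraction of the block $\cS_i$ is well-defined, induction on $r$) and your derivation that $\sum_i \gamma_i^2 = 1$ from $\sum_l \alpha_l^2 = \sum_j \beta_j^2 = 1$ are exactly the ``ready verification'' the paper has in mind, so there is nothing to compare beyond level of detail.
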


\begin{proof} The first assertion follows from Theorem \ref{2.1}. One readily verifies
the algorithm.
\end{proof}

We illustrate the algorithm in the following.

\medskip\noindent
\begin{example}
{\rm
For simplicity, we ignore normalization in the following.
Suppose $A_1=\{4,2,2,1\}$.
If $\{\beta_i\}=\{2,1\}$, then we can find $\{\gamma_i\}=\{2,1\}$ to produce $A_1$.
If $\{\beta_i\}=\{4,2\}$, then we can find $\{\gamma_i\}=\{1,1/2\}$ to produce $A_1$.
On the other hand, if $\{\beta_i\}=\{2,1,1\}, \{2,1/2\}$, or $\{1,1\}$, then no $\{\gamma_i\}$ exists to produce $A_1$.
}
\end{example}

\medskip
One easily deduces more necessary conditions for the existence of LO map in Theorem \ref{2.1}.

\begin{corollary}
 Use the notation in Theorem~\ref{2.1}. If any one (and hence all) of the
conditions (a) -- (c) hold, then for any $1 \le i \le j \le k$,
$$U (E_{11}E_{11}^* \otimes X_iX_j^*) U^* = R_iR_j^* \otimes Y_i Y_j^* \qquad \hbox{ and } \qquad
V^* (E_{11}^* E_{11} \otimes X_i^*X_j) V = R_i^*R_j \otimes Y_i^* Y_j.$$
In particular,  the  eigenvalues of $X_i X_j^*$ can be obtained from those of
$Y_iY_j^*$ by taking their multiple with
some $\gamma_1^{(ij)}, \dots, \gamma_{\ell_{ij}}^{(ij)} \in \IC$.
\end{corollary}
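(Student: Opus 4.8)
The plan is to derive both displayed identities directly from condition (c) of Theorem~\ref{2.1} by pairing the relation $U(E_{11}\otimes X_i)V = R_i\otimes Y_i$ with its adjoint, and then to read off the eigenvalue statement from the Kronecker structure. Since (a)--(c) are equivalent, it suffices to assume (c): fix unitaries $U,V$ and matrices $R_1,\dots,R_k$ with $U(E_{11}\otimes X_l)V = R_l\otimes Y_l$ for every $l$.

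To obtain the first identity I would multiply the relation for index $i$ on the right by the adjoint of the relation for index $j$, giving
$$U(E_{11}\otimes X_i)VV^*(E_{11}\otimes X_j)^*U^* = (R_i\otimes Y_i)(R_j\otimes Y_j)^*.$$
Using $VV^*=I$, the mixed-product rule $(A\otimes B)(C\otimes D)=AC\otimes BD$, and $(E_{11}\otimes X_i)(E_{11}\otimes X_j)^* = E_{11}E_{11}^*\otimes X_iX_j^*$, this collapses to $U(E_{11}E_{11}^*\otimes X_iX_j^*)U^* = R_iR_j^*\otimes Y_iY_j^*$. The second identity is obtained symmetrically, by instead forming the product $(U(E_{11}\otimes X_i)V)^*\,U(E_{11}\otimes X_j)V$, cancelling $U^*U=I$, and using $(E_{11}\otimes X_i)^*(E_{11}\otimes X_j)=E_{11}^*E_{11}\otimes X_i^*X_j$. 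Both steps are purely formal manipulations with unitaries and tensor products.

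For the eigenvalue claim I would exploit that conjugation by the unitary $U$ preserves the spectrum, so $E_{11}E_{11}^*\otimes X_iX_j^*$ and $R_iR_j^*\otimes Y_iY_j^*$ share the same multiset of eigenvalues. Since the $p\times p$ matrix $E_{11}E_{11}^*$ has eigenvalues $1,0,\dots,0$ (a single $1$ and $p-1$ zeros), the eigenvalues of the left-hand Kronecker product are precisely those of $X_iX_j^*$ together with extra zeros; on the right, the eigenvalues of a Kronecker product are all pairwise products of the eigenvalues of $R_iR_j^*$ with those of $Y_iY_j^*$. Matching the two multisets then shows that every eigenvalue of $X_iX_j^*$ equals $\gamma\,\kappa$ for some eigenvalue $\gamma$ of $R_iR_j^*$ and some eigenvalue $\kappa$ of $Y_iY_j^*$, which is exactly the assertion, with the scalars $\gamma^{(ij)}_\ell$ taken to be the eigenvalues of $R_iR_j^*$.

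I do not expect any serious obstacle: both identities are immediate consequences of unitary invariance and the mixed-product property of the Kronecker product. The only point needing care is the bookkeeping in the eigenvalue argument, namely observing that $E_{11}E_{11}^*$ contributes exactly one nonzero eigenvalue, equal to $1$, so that the full spectrum of $X_iX_j^*$ is recovered on the left-hand side without any rescaling; once this is noted, the product structure of the Kronecker spectrum supplies the scalars $\gamma^{(ij)}_\ell$ and finishes the proof.
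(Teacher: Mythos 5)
Your proposal is correct and follows exactly the argument the paper intends (the corollary is stated as ``easily deduced'' without a written proof): pair the identity $U(E_{11}\otimes X_i)V=R_i\otimes Y_i$ from condition (c) with the adjoint of the corresponding identity for $j$, cancel $VV^*=I$ or $U^*U=I$, and apply the mixed-product rule, then read off the spectral claim from unitary invariance and the fact that the eigenvalues of a Kronecker product are the pairwise products of the factors' eigenvalues. Your bookkeeping is also right: $E_{11}E_{11}^*$ contributes the single nonzero eigenvalue $1$, so the left-hand spectrum is that of $X_iX_j^*$ padded with zeros, and the scalars $\gamma_\ell^{(ij)}$ are precisely the eigenvalues of $R_iR_j^*$.
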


\medskip
If $X_1, \dots, X_k$ and $Y_1, \dots, Y_k$ are all rank one matrices, we have the following.

\begin{corollary} \label{2.3} Suppose the matrices $X_i, Y_i$
 for $i = 1, \dots, k$ constructed in
Theorem \ref{2.1} have rank one. Let
$X_i = a_i b_i^*, Y_i = c_i d_i^*$
with unit vectors $a_i, c_i \in \IC^m, b_i, d_i \in \IC^n$ for $i = 1, \dots, k$.
Then conditions (a) - (c) in Theorem \ref{2.1} are equivalent to:

\medskip
{\rm (d)} There are unitary matrices $U \in M_{mp}$, $V \in M_{nq}$, and
unit vectors $\zeta_1, \dots, \zeta_k \in \IC^p$, $\eta_1, \dots, \eta_k \in \IC^q$
such that
$$U\begin{bmatrix}a_1 & \cdots & a_k\cr 0& \cdots & 0 \cr\end{bmatrix}
= \begin{bmatrix}\zeta_1 \otimes c_1& \cdots & \zeta_k \otimes c_k\cr\end{bmatrix}
\ \hbox{ and } \
V^*\begin{bmatrix} b_1 & \cdots & b_k\cr 0& \cdots & 0 \cr\end{bmatrix}
= \begin{bmatrix}\eta_1 \otimes d_1& \cdots & \eta_k \otimes d_k\cr\end{bmatrix}.$$
\end{corollary}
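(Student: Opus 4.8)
The plan is to establish the equivalence by reducing everything to condition (c) of Theorem~\ref{2.1}, since (a)--(c) are already known to be equivalent there. First I would rewrite the matrices appearing in (c) as outer products, exploiting the rank-one hypothesis on every $X_i$ and $Y_i$. Writing $E_{11} = e_1 f_1^*$ with $e_1 \in \IC^p$ and $f_1 \in \IC^q$ the first standard basis vectors, and using the identity $(uv^*) \otimes (wz^*) = (u \otimes w)(v \otimes z)^*$, one obtains
$$E_{11} \otimes X_i = (e_1 \otimes a_i)(f_1 \otimes b_i)^*,$$
where $e_1 \otimes a_i = \begin{bmatrix} a_i \\ 0 \end{bmatrix} \in \IC^{mp}$ and $f_1 \otimes b_i = \begin{bmatrix} b_i \\ 0 \end{bmatrix} \in \IC^{nq}$ are precisely the zero-padded vectors that appear as the columns in the matrices of condition (d).

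The key structural observation is that $R_i$ is forced to be rank one. Since $U,V$ are unitary and $X_i, Y_i$ are rank-one and nonzero, the equation $U(E_{11} \otimes X_i)V = R_i \otimes Y_i$ gives
$$\rank(R_i) = \rank(R_i \otimes Y_i) = \rank(E_{11} \otimes X_i) = 1.$$
Hence I may factor $R_i = \zeta_i \eta_i^*$ with $\zeta_i \in \IC^p$, $\eta_i \in \IC^q$, and the normalization $\tr R_i R_i^* = \|\zeta_i\|^2\|\eta_i\|^2 = 1$ permits the choice of $\zeta_i, \eta_i$ as unit vectors. Then $R_i \otimes Y_i = (\zeta_i \otimes c_i)(\eta_i \otimes d_i)^*$.

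With these factorizations the $i$-th equation of (c) becomes the rank-one identity
$$\bigl(U(e_1 \otimes a_i)\bigr)\bigl(V^*(f_1 \otimes b_i)\bigr)^* = (\zeta_i \otimes c_i)(\eta_i \otimes d_i)^*.$$
Equality of two rank-one matrices $uv^* = wz^*$ with nonzero factors forces $u = \lambda w$ and $z = \bar\lambda v$ for a scalar $\lambda$; comparing norms (all four tensor vectors are unit vectors here) shows $|\lambda| = 1$, and I can absorb this unimodular phase into $\zeta_i$ and $\eta_i$ without disturbing either unitarity or the unit-norm condition. This yields $U(e_1 \otimes a_i) = \zeta_i \otimes c_i$ and $V^*(f_1 \otimes b_i) = \eta_i \otimes d_i$ for each $i$, which, assembled columnwise over $i = 1, \dots, k$, are exactly the two matrix equations of (d). The converse (d) $\Rightarrow$ (c) is then immediate: set $R_i = \zeta_i \eta_i^*$ (so $\tr R_i R_i^* = 1$) and run the same computation backwards.

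I expect the main obstacle to be the bookkeeping in the decoupling step: the single equation in (c) entangles the row data $(a_i, c_i)$ with the column data $(b_i, d_i)$ through one scalar, and one must check that the proportionality constants arising from matching the two rank-one sides are genuinely unimodular and can be absorbed consistently so that $\zeta_i, \eta_i$ stay unit vectors and (d) holds verbatim. The rank count forcing $\rank(R_i) = 1$ is the crucial input that makes the outer-product factorization, and hence the whole decoupling, possible.
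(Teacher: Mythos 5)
Your proof is correct, and it follows exactly the route the paper intends: the paper states Corollary \ref{2.3} without proof, treating it as immediate from condition (c) of Theorem \ref{2.1}, and your argument supplies precisely the omitted details. The rank count forcing $\rank R_i = 1$, the factorizations $E_{11}\otimes X_i = (e_1\otimes a_i)(f_1\otimes b_i)^*$ and $R_i\otimes Y_i = (\zeta_i\otimes c_i)(\eta_i\otimes d_i)^*$, and the absorption of the unimodular scalar into $\zeta_i$ and $\eta_i$ are all handled correctly, including the converse via $R_i = \zeta_i\eta_i^*$.
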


The next theorem reduces the general problem to the situation described in  Corollary \ref{2.3}.
The key idea is to use the fact that every $m\times n$ matrix $X$ of rank $r$
admits a Schmidt decomposition  $X = \sum_{j=1}^r s_j(X) a_j b_j^*$, where
$s_1(X) \ge \cdots \ge s_r(X) > 0$, and
$\{a_1, \dots, a_r\} \subseteq \IC^m$
and  $\{b_1, \dots, b_r\} \subseteq \IC^n$
are orthonormal sets.

\begin{theorem} \label{2.4} Use the notation in Theorem \ref{2.1}.
Then any of the conditions (a) - (c)
in Theorem \ref{2.1} holds only if

\medskip
{\rm (e)} for each $i = 1, \dots, k$, the matrix
$Y_i$ has a Schmidt decomposition
$Y_i = \sum_{v=1}^{\rank(Y_i)} s_{v}(Y_i) Y_{iv}$ and there are positive numbers
$\gamma_{i1}, \dots, \gamma_{i\ell_i}$ with $\ell_i = \rank(X_i)/\rank(Y_i)$
such that $X_i$ has a Schmidt decomposition
$X_i = \sum_{u,v} \gamma_{iu} s_v(Y_i) X_{iuv}$ with $X_{iuv} \in M_{m,n}$,
and there is a tensor TPCP map sending $X_{iuv}$ to $Y_{iv}$
for all $i,u,v$.
\end{theorem}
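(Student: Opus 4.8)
The plan is to derive (e) directly from condition (c), which furnishes unitaries $U \in M_{mp}$, $V\in M_{nq}$ and matrices $R_1,\dots,R_k$ with $\tr R_iR_i^* = 1$ satisfying $U(E_{11}\otimes X_i)V = R_i\otimes Y_i$. First I would fix, for each $i$, a singular value decomposition $R_i = \sum_{u=1}^{\ell_i}\gamma_{iu}\,\phi_{iu}\psi_{iu}^*$ with $\gamma_{iu} > 0$ and orthonormal sets $\{\phi_{iu}\}\subseteq\IC^p$, $\{\psi_{iu}\}\subseteq\IC^q$, together with a Schmidt decomposition $Y_i = \sum_{v=1}^{\rank Y_i} s_v(Y_i)\,c_{iv}d_{iv}^*$. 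Since $\sum_u \gamma_{iu}^2 = \tr R_iR_i^* = 1$, the $\gamma_{iu}$ are a legitimate family of positive weights and $\ell_i = \rank R_i$.

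Next I would rewrite the identity as $E_{11}\otimes X_i = U^*(R_i\otimes Y_i)V^*$ and expand the right-hand side as $\sum_{u,v}\gamma_{iu}s_v(Y_i)\,[U^*(\phi_{iu}\otimes c_{iv})]\,[V(\psi_{iu}\otimes d_{iv})]^*$. Because $U,V$ are unitary and the $\phi_{iu}\otimes c_{iv}$ (resp.\ $\psi_{iu}\otimes d_{iv}$) are orthonormal, this is a singular value decomposition of $E_{11}\otimes X_i$ with singular values $\gamma_{iu}s_v(Y_i)$; in particular $\rank X_i = \ell_i\,\rank Y_i$, so $\ell_i = \rank X_i/\rank Y_i$. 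The crucial structural point is that $E_{11}\otimes X_i$ is block diagonal with $X_i$ in the $(1,1)$ block, so its column space lies in $e_1^{(p)}\otimes\IC^m$ and its row space in $e_1^{(q)}\otimes\IC^n$, where $e_1^{(p)}$, $e_1^{(q)}$ are the first standard basis vectors. Since every left (right) singular vector for a nonzero singular value lies in the column (row) space, this forces $U^*(\phi_{iu}\otimes c_{iv}) = e_1^{(p)}\otimes a_{iuv}$ and $V(\psi_{iu}\otimes d_{iv}) = e_1^{(q)}\otimes b_{iuv}$ for unit vectors $a_{iuv}\in\IC^m$, $b_{iuv}\in\IC^n$. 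Substituting back and cancelling the common $E_{11}$ factor yields $X_i = \sum_{u,v}\gamma_{iu}s_v(Y_i)\,a_{iuv}b_{iuv}^*$, whose $a$'s and $b$'s inherit orthonormality from the unitarity of $U,V$; hence it is a genuine Schmidt decomposition, and setting $X_{iuv} = a_{iuv}b_{iuv}^*$, $Y_{iv} = c_{iv}d_{iv}^*$ gives the decompositions demanded in (e). Each $X_{iuv}$ and $Y_{iv}$ is rank one with Frobenius norm one, so Corollary~\ref{2.3} applies to them.

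Finally I would produce the tensor TPCP map sending the rank-one pieces $X_{iuv}$ to $Y_{iv}$ by verifying condition (d) of Corollary~\ref{2.3} with the \emph{same} unitaries. Inverting the two vector identities gives $U(e_1^{(p)}\otimes a_{iuv}) = \phi_{iu}\otimes c_{iv}$ and $V^*(e_1^{(q)}\otimes b_{iuv}) = \psi_{iu}\otimes d_{iv}$ for all $i,u,v$. Choosing $U' = U$, $V' = V$ and the unit vectors $\zeta_{iuv} = \phi_{iu}$, $\eta_{iuv} = \psi_{iu}$ then makes the two displayed equations of (d) hold simultaneously for every triple, since the single pair $U,V$ coming from (c) serves all $i$ at once. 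Corollary~\ref{2.3} supplies the desired tensor TPCP map, completing the derivation of (e).

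The step I expect to require the most care is the one where degeneracies among the products $\gamma_{iu}s_v(Y_i)$ would ordinarily obstruct any attempt to match singular vectors of two independently chosen decompositions. The plan sidesteps this entirely: rather than independently fixing a Schmidt decomposition of $X_i$ and then matching it against that of $R_i\otimes Y_i$, I read the decomposition of $X_i$ off from $U^*(R_i\otimes Y_i)V^*$, so that the index correspondence, and the compatibility of $\zeta_{iuv},\eta_{iuv}$ with one global pair $U,V$, is automatic. The only genuine verification is the forced membership of the singular vectors in $e_1^{(p)}\otimes\IC^m$ and $e_1^{(q)}\otimes\IC^n$, which follows at once from the block structure of $E_{11}\otimes X_i$.
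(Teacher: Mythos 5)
Your proposal is correct and follows essentially the same route as the paper's proof: both start from condition (c), rewrite it as $E_{11}\otimes X_i = U^*(R_i\otimes Y_i)V^*$, expand $R_i$ and $Y_i$ into rank-one Schmidt terms to obtain a singular value decomposition of $E_{11}\otimes X_i$, and exploit the block structure of $E_{11}\otimes X_i$ to force the singular vectors into the forms $e_1\otimes a_{iuv}$ and $e_1\otimes b_{iuv}$, which yields the Schmidt decomposition of $X_i$ and the desired tensor TPCP map. Your closing verification via Corollary~\ref{2.3}(d) with the same $U,V$ simply makes explicit the paper's final remark that the map furnished by (c) already sends $X_{iuv}$ to $Y_{iv}$.
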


\begin{proof}
We only need to show that (c) implies (e).
If (c) holds, then
$E_{11} \otimes X_{i} = U^*(R_i\otimes Y_i)V^*$ for each $i = 1, \dots, k$.
Taking a Schmidt decomposition for each of $R_i = \sum s_u(R_i) R_{iu}$ and
$Y_i = \sum s_v(Y_i) Y_{iv}$, where $R_{iu}$ and $Y_{iv}$ have rank one and Frobenius norm one,
we see that $E_{11} \otimes X_i$ has a Schmidt decomposition
$$E_{11} \otimes X_i = \sum_{u,v} s_u(R_i)s_{v}(Y_i) \tilde X_{iuv}
\qquad \hbox{ with } \qquad
\tilde X_{iuv} = U^*(R_{iu} \otimes Y_{iv})V^*.$$
We claim that $\tilde X_{iuv} = E_{11} \otimes X_{iuv}$ with $X_{iuv} \in M_{m,n}$.
To see this, note that $\tilde X_{iuv} = \tilde a_{iuv}\tilde b_{iuv}^*$ for some
unit vectors $\tilde a_{iuv} \in \IC^{mp}$ and $\tilde b_{iuv}\in \IC^{nq}$
such that
$$(E_{11}\otimes X_i) \tilde b_{iuv} = s_u(R_i)s_v(Y_i)\tilde a_{iuv} \quad \hbox{ and } \quad
(E_{11}\otimes X_i)^*\tilde a_{iuv} = s_u(R_i)s_v(Y_i) \tilde b_{iuv}.$$
Thus,
$$\tilde a_{iuv} = \begin{bmatrix}a_{iuv} \cr 0\cr\end{bmatrix} \qquad \hbox{ and }
\qquad \tilde b_{iuv} = \begin{bmatrix}b_{iuv} \cr 0\cr\end{bmatrix}$$
with $a_{iuv} \in \IC^m$ and $b_{iuv} \in \IC^n$.
Thus, $\tilde X_{iuv} = E_{11} \otimes X_{iuv}$ with $X_{iuv} = a_{iuv}b_{iuv}^*$.
Consequently, the tensor TPCP map described in (c) will send $X_{iuv}$ to $Y_{iv}$.
\end{proof}

Note that the converse $(e) \implies (c)$ does not hold in general.  However, one can readily verify that the converse holds if one makes the additional stipulation that there exist $R_{iu} \in M_{pq}$ and unitary $U \in M_{mp}$, $V \in M_{nq}$ such that $U(E_{11} \otimes X_{iuv})V = R_{iu} \otimes Y_{iv}$ for all $i,u,v$.

By the result in Ref.~\citenum{Chefles_Jozsa_Winter_2004}, we can extend Theorem \ref{2.1} to the
following.

\begin{corollary} Suppose $A_1, \dots, A_k \in M_{mn}$ are mixed states,
and $y_1 y_1^*, \dots, y_ky_k^*\in M_{mn}$ are pure states such that
$A_i = \sum_{j=1}^{\ell_i} x_{ij}x_{ij}^*$ for $i = 1, \dots, k$, where
$\ell_1, \dots, \ell_k$ are positive integers.
Then there is a TPCP map in the tensor form {\rm (\ref{form})} such that
$T(A_i) = y_iy_i^*$ for all $i = 1, \dots, k$ if and only if
there is a TPCP map of the tensor form {\rm (\ref{form})} such that
$T(x_{ij}x_{ij}^*) = (x_{ij}^*x_{ij})y_iy_i^*$ for all $i = 1, \dots, k$, and
$j = 1, \dots, \ell_i$.
\end{corollary}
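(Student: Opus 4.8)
The plan is to exhibit a single tensor TPCP map that simultaneously witnesses both existence statements, so that the equivalence reduces to linearity together with one positivity argument; there is no need to construct two different maps. I would dispose of the ``if'' direction first, as it is purely formal. Suppose a tensor map $T$ of the form (\ref{form}) satisfies $T(x_{ij}x_{ij}^*)=(x_{ij}^*x_{ij})\,y_iy_i^*$ for all $i,j$. Linearity of $T$ then gives $T(A_i)=\sum_{j=1}^{\ell_i}T(x_{ij}x_{ij}^*)=\left(\sum_{j=1}^{\ell_i}x_{ij}^*x_{ij}\right)y_iy_i^*=(\tr A_i)\,y_iy_i^*=y_iy_i^*$, since each $A_i$ is a density matrix with $\tr A_i=1$. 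The same $T$ therefore witnesses the left-hand statement, and this direction is complete.

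For the ``only if'' direction, suppose $T$ is a tensor TPCP map with $T(A_i)=y_iy_i^*$, and I would argue that this very map already satisfies the stronger pointwise identities. Complete positivity forces each $T(x_{ij}x_{ij}^*)$ to be positive semidefinite, and summing over $j$ gives $\sum_j T(x_{ij}x_{ij}^*)=T(A_i)=y_iy_i^*$. Hence each summand satisfies $0\le T(x_{ij}x_{ij}^*)\le y_iy_i^*$, and I would invoke the standard fact that any positive semidefinite matrix dominated by the rank-one matrix $y_iy_i^*$ must be a nonnegative multiple of it; thus $T(x_{ij}x_{ij}^*)=c_{ij}\,y_iy_i^*$ for some $c_{ij}\ge 0$. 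To identify the coefficient I would take the trace and use trace preservation: $c_{ij}\tr(y_iy_i^*)=\tr T(x_{ij}x_{ij}^*)=\tr(x_{ij}x_{ij}^*)=x_{ij}^*x_{ij}$. Because $y_i$ is a unit vector, $\tr(y_iy_i^*)=1$, so $c_{ij}=x_{ij}^*x_{ij}$, which is exactly the required identity.

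The only nontrivial step is the range/domination argument that collapses each positive image onto $\mathrm{span}(y_i)$ --- this is the structural input (in the spirit of Ref.~\citenum{Chefles_Jozsa_Winter_2004}) that lets a mixed-to-pure transformation be read off from a pure-state ensemble --- but it is routine once the inequality $0\le T(x_{ij}x_{ij}^*)\le y_iy_i^*$ is in place. As a consistency check I would verify that the coefficients sum correctly, $\sum_j c_{ij}=\sum_j x_{ij}^*x_{ij}=\tr A_i=1$, matching $T(A_i)=y_iy_i^*$. I expect no difficulty in preserving the tensor (local) structure: because the witnessing map is literally unchanged between the two formulations, its form (\ref{form}) is automatic, and so the reduction faithfully connects the mixed-state problem to the pure-state problem governed by Theorem~\ref{2.1}.
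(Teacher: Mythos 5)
Your proposal is correct and takes essentially the same route as the paper: the same tensor map witnesses both directions, with the ``only if'' part resting on $0 \le T(x_{ij}x_{ij}^*) \le y_iy_i^*$ forcing each image into $\mathrm{span}(y_i)$ (the paper's ``range space'' remark) and trace preservation fixing the coefficient $x_{ij}^*x_{ij}$, while the ``if'' part is linearity plus $\tr A_i = 1$. The only difference is cosmetic: you spell out the rank-one domination lemma that the paper leaves implicit, and you bypass the opening appeal to the result of Chefles, Jozsa, and Winter, which the paper's own argument does not actually need for the tensor-form statement.
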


\begin{proof} By the result in Ref.~\citenum{Chefles_Jozsa_Winter_2004}, there is a TPCP map
$T$ sending the mixed states $A_1, \dots, A_k$ to $y_1y_1^*, \dots, y_ky_k^*$
if and only if there is a TPCP map sending $x_{ij}x_{ij}^*$ to
$(x_{ij}^*x_{ij})y_iy_i^*$ for all $i = 1, \dots, k$, and
$j = 1, \dots, \ell_i$. Clearly, if $T$ has the tensor form (\ref{form}) and sends
$A_i$ to $y_iy_i^*$, then $y_iy_i^* = T(A_i) = \sum T(x_{ij}x_{ij}^*) \ge T(x_{ij}x_{ij}^*)$,
where $P \ge Q$ means that $P-Q$ is positive semidefinite. Considering the
range space and the trace of  $T(x_{ij}x_{ij}^*)$, we see that
$T(x_{ij}x_{ij}^*) = (x_{ij}^*x_{ij})y_iy_i^*$. The converse is clear.
\end{proof}

We close this section with the following example.

\medskip\noindent
\begin{example}
{\rm
Suppose
\begin{eqnarray}
X_1=
\frac{1}{\sqrt{85}}
\begin{bmatrix}
8 & 0 & 0 & 0\\
0 & 2 & 0 & 0\\
0 & 0 & 4 & 0\\
0 & 0 & 0 & 1
\end{bmatrix},
\:
Y_1=
\frac{4}{\sqrt{20}} a_1 b_1^*+
\frac{2}{\sqrt{20}} a_2 b_2^*,
\end{eqnarray}
\begin{eqnarray}
X_2=
\begin{bmatrix}
\frac{1}{\sqrt{2}} & 0 \\
\frac{1}{\sqrt{2}} & 0 \\
0&\frac{1}{\sqrt{2}}\\
0&\frac{1}{\sqrt{2}}
\end{bmatrix}
\begin{bmatrix}
\frac{3}{\sqrt{10}} & 0\\
0&\frac{1}{\sqrt{10}}
\end{bmatrix}
\begin{bmatrix}
\frac{1}{2}
&
\frac{1}{2}
&
\frac{1}{2}
&
\frac{1}{2}
\\
\frac{1}{2}
&
\frac{1}{2}
&
-\frac{1}{2}
&
-\frac{1}{2}
\end{bmatrix},
\:
Y_2=
\frac{3}{\sqrt{10}} a_1 b_1^*+
\frac{1}{\sqrt{10}} a_2 b_2^*,
\end{eqnarray}
where $\{a_1,a_2\}$ and $\{b_1,b_2\}$ are orthonormal sets.
We use Theorem~\ref{2.4} and
Corollary~\ref{2.3} to check whether there is a desired TPCP map for these states.
As required by condition (e) of Theorem~\ref{2.4}, we verify that the singular values of $X_1$ and $Y_1$ are related by $\{\gamma_{11},\gamma_{12}\}=\sqrt{\frac{20}{85}}\{2,1/2\}$, and those of $X_2$ and $Y_2$ are related by $\gamma_{21}=1$.
Next, we check whether there is a
tensor TPCP map sending rank one $X_{iuv}$ to rank one $Y_{iv}$.
For the left singular vectors, we seek a transformation that maps
\begin{eqnarray}
\begin{bmatrix}
1\\0\\0\\0
\end{bmatrix}
\rightarrow
a_1,
\:
\begin{bmatrix}
0\\1\\0\\0
\end{bmatrix}
\rightarrow
a_1,
\begin{bmatrix}
0\\0\\1\\0
\end{bmatrix}
\rightarrow
a_2,
\:
\begin{bmatrix}
0\\0\\0\\1
\end{bmatrix}
\rightarrow
a_2,
\:
\begin{bmatrix}
\frac{1}{\sqrt{2}}\\
\frac{1}{\sqrt{2}}\\
0\\
0
\end{bmatrix}
\rightarrow
a_1,
\:
\begin{bmatrix}
0\\
0\\
\frac{1}{\sqrt{2}}\\
\frac{1}{\sqrt{2}}
\end{bmatrix}
\rightarrow
a_2.
\end{eqnarray}

Using previous results~\cite{Uhlmann1985,Chefles200014,PhysRevA.65.052314,Chefles_Jozsa_Winter_2004},
we form the Gram matrices of the input and output states
\begin{eqnarray}
G_X=
\begin{bmatrix}
1&0&0&0&\frac{1}{\sqrt{2}}&0
\\
0&1&0&0&\frac{1}{\sqrt{2}}&0\\
0&0&1&0&0&\frac{1}{\sqrt{2}}\\
0&0&0&1&0&\frac{1}{\sqrt{2}}\\
\frac{1}{\sqrt{2}}&\frac{1}{\sqrt{2}}&0&0&1&0\\
0&0&\frac{1}{\sqrt{2}}&\frac{1}{\sqrt{2}}&0&1
\end{bmatrix},
\:
G_Y=
\begin{bmatrix}
1&1&0&0&1&0\\
1&1&0&0&1&0\\
0&0&1&1&0&1\\
0&0&1&1&0&1\\
1&1&0&0&1&0\\
0&0&1&1&0&1
\end{bmatrix},
\end{eqnarray}
and check whether there is a
correlation matrix $M$ that satisfies $G_X=M \circ G_Y$.
Clearly, $M=G_X$ is a valid choice.

For the right singular vectors, we seek a transformation that maps
\begin{eqnarray}
\begin{bmatrix}
1\\0\\0\\0
\end{bmatrix}
\rightarrow
b_1,
\:
\begin{bmatrix}
0\\1\\0\\0
\end{bmatrix}
\rightarrow
b_1,
\begin{bmatrix}
0\\0\\1\\0
\end{bmatrix}
\rightarrow
b_2,
\:
\begin{bmatrix}
0\\0\\0\\1
\end{bmatrix}
\rightarrow
b_2,
\:
\frac{1}{2}
\begin{bmatrix}
1\\
1\\
1\\
1
\end{bmatrix}
\rightarrow
b_1,
\:
\frac{1}{2}
\begin{bmatrix}
1\\
1\\
-1\\
-1
\end{bmatrix}
\rightarrow
b_2.
\end{eqnarray}
The Gram matrices are
\begin{eqnarray}
G_X=
\begin{bmatrix}
1&0&0&0&\frac{1}{2}&\frac{1}{2}
\\
0&1&0&0&\frac{1}{2}&\frac{1}{2}\\
0&0&1&0&\frac{1}{2}&-\frac{1}{2}\\
0&0&0&1&\frac{1}{2}&-\frac{1}{2}\\
\frac{1}{2}&\frac{1}{2}&\frac{1}{2}&\frac{1}{2}&1&0\\
\frac{1}{2}&\frac{1}{2}&-\frac{1}{2}&-\frac{1}{2}&0&1
\end{bmatrix},
\:
G_Y=
\begin{bmatrix}
1&1&0&0&1&0\\
1&1&0&0&1&0\\
0&0&1&1&0&1\\
0&0&1&1&0&1\\
1&1&0&0&1&0\\
0&0&1&1&0&1
\end{bmatrix}.
\end{eqnarray}
Clearly, no $M$ exists
that satisfies $G_X=M \circ G_Y$.
Therefore, no tensor TPCP map exists that transforms $X_1$ to $Y_1$ and $X_2$ to $Y_2$.

}
\end{example}


\section{Connection with other transformations}

The transformation considered in this paper can be viewed as a combination of the one-party multiple-state transformation~\cite{Uhlmann1985,Chefles200014,PhysRevA.65.052314,Chefles_Jozsa_Winter_2004} and the two-party single-state transformation~\cite{PhysRevLett.83.436}.
Let's consider an example of transforming $\{\ket{x_1}_{AB},\ket{x_2}_{AB}\}$ to $\{\ket{y_1}_{AB},\ket{y_2}_{AB}\}$ where a single channel acting on $AB$ exists to transform the two inputs to the two outputs simultaneously and two bipartite transformations exist to separately transform $\ket{x_i}_{AB}$ to $\ket{y_i}_{AB}$.
However, no bipartite LO transformation exists mapping $\ket{x_i}_{AB}$ to $\ket{y_i}_{AB}$ for $i=1,2$ simultaneously.
We use Theorem~\ref{2.1} to check.

Consider systems $A$ and $B$ where each has dimension 4.
Let
\begin{eqnarray}
\ket{x_1}_{AB}
&=&
(1.6 \ket{00} + 1.2 \ket{11} + 0.8 \ket{22} + 0.6 \ket{33})/\sqrt{5}
\\
\ket{y_1}_{AB}
&=&
0.8 \ket{00} + 0.6 \ket{11}
\\
\ket{x_2}_{AB}
&=&
(1.2 \ket{00} - 1.6 \ket{11} - 0.6 \ket{22} + 0.8 \ket{33})/\sqrt{5}
\\
\ket{y_2}_{AB}
&=&
(2 \ket{00} +  \ket{11})/\sqrt{5} .
\end{eqnarray}
Using the pure-state single-party result~\cite{Uhlmann1985,Chefles200014,PhysRevA.65.052314,Chefles_Jozsa_Winter_2004},
we can verify that a TPCP map $T$ acting on $AB$ exists such that
$T(\ket{x_i}\bra{x_i})=\ket{y_i}\bra{y_i}$
for $i = 1,  2$.
We need to check the existence of a 
correlation matrix $M$ such that
$G_X = M \circ G_Y$ where the Gram matrices $G_X=(\langle x_i \ket{x_j})=
\begin{bmatrix}
1 & 0 \\
0 & 1
\end{bmatrix}
$
and $G_y=(\langle y_i \ket{y_j})=
\begin{bmatrix}
1 & 0.98 \\
0.98 & 1
\end{bmatrix}
$.
Clearly, we can take $M=G_X$.

Next, we consider whether each input state $\ket{x_i}_{AB}$ can individually be transformed to $\ket{y_i}_{AB}$ using a bipartite LO transformation.
According to Proposition~\ref{2.2}, $X_i$ can be mapped to $Y_i$ using a bipartite LO transformation if and only if $X_i$ and $R_i \otimes Y_i$ have the same singular values, in other words,
$U_i (E_{11} \otimes X_i) V_i = R_i \otimes Y_i$.
We can verify that this is satisfied for $i=1,2$ with
\begin{eqnarray}
&
X_1=
\frac{1}{\sqrt{5}}
\begin{bmatrix}
1.6 & 0 & 0 & 0 \\
0 & 1.2 & 0 & 0 \\
0 & 0 & 0.8 & 0 \\
0 & 0 & 0 & 0.6
\end{bmatrix},
\:
R_1=
\frac{1}{\sqrt{5}}
\begin{bmatrix}
2 & 0\\
0 & 1
\end{bmatrix},
\:
Y_1=
\begin{bmatrix}
0.8 & 0 & 0 & 0\\
0 & 0.6 & 0 & 0\\
0 & 0 & 0 & 0 \\
0 & 0 & 0 & 0
\end{bmatrix},
\\
&
U_1=
\begin{bmatrix}
1 & 0 & 0 & 0 & 0 & 0 & 0 & 0\\
0 & 1 & 0 & 0 & 0 & 0 & 0 & 0\\
0 & 0 & 0 & 0 & * & * & * & *\\
0 & 0 & 0 & 0 & * & * & * & *\\
0 & 0 & 1 & 0 & 0 & 0 & 0 & 0\\
0 & 0 & 0 & 1 & 0 & 0 & 0 & 0\\
0 & 0 & 0 & 0 & * & * & * & *\\
0 & 0 & 0 & 0 & * & * & * & *
\end{bmatrix},
\:
V_1=
\begin{bmatrix}
1 & 0 & 0 & 0 & 0 & 0 & 0 & 0\\
0 & 1 & 0 & 0 & 0 & 0 & 0 & 0\\
0 & 0 & 0 & 0 & 1 & 0 & 0 & 0\\
0 & 0 & 0 & 0 & 0 & 1 & 0 & 0\\
0 & 0 & * & * & 0 & 0 & * & *\\
0 & 0 & * & * & 0 & 0 & * & *\\
0 & 0 & * & * & 0 & 0 & * & *\\
0 & 0 & * & * & 0 & 0 & * & *
\end{bmatrix},
\end{eqnarray}
and
\begin{eqnarray}
&
X_2=
\frac{1}{\sqrt{5}}
\begin{bmatrix}
1.2 & 0 & 0 & 0 \\
0 & -1.6 & 0 & 0 \\
0 & 0 & -0.6 & 0 \\
0 & 0 & 0 & 0.8
\end{bmatrix},
\:
R_2=
\begin{bmatrix}
0.6 & 0\\
0 & 0.8
\end{bmatrix},
\:
Y_2=
\frac{1}{\sqrt{5}}
\begin{bmatrix}
2 & 0 & 0 & 0 \\
0 & 1 & 0 & 0 \\
0 & 0 & 0 & 0 \\
0 & 0 & 0 & 0
\end{bmatrix},
\\
&
U_2=
\begin{bmatrix}
1 & 0 & 0 & 0 & 0 & 0 & 0 & 0\\
0 & 0 & -1 & 0 & 0 & 0 & 0 & 0\\
0 & 0 & 0 & 0 & * & * & * & *\\
0 & 0 & 0 & 0 & * & * & * & *\\
0 & -1 & 0 & 0 & 0 & 0 & 0 & 0\\
0 & 0 & 0 & 1 & 0 & 0 & 0 & 0\\
0 & 0 & 0 & 0 & * & * & * & *\\
0 & 0 & 0 & 0 & * & * & * & *
\end{bmatrix},
\:
V_2=
\begin{bmatrix}
1 & 0 & 0 & 0 & 0 & 0 & 0 & 0\\
0 & 0 & 0 & 0 & 1 & 0 & 0 & 0\\
0 & 1 & 0 & 0 & 0 & 0 & 0 & 0\\
0 & 0 & 0 & 0 & 0 & 1 & 0 & 0\\
0 & 0 & * & * & 0 & 0 & * & *\\
0 & 0 & * & * & 0 & 0 & * & *\\
0 & 0 & * & * & 0 & 0 & * & *\\
0 & 0 & * & * & 0 & 0 & * & *
\end{bmatrix}.
\end{eqnarray}
Thus, individual transformations are possible.
This is consistent with
Nielsen's result~\cite{PhysRevLett.83.436}.
Since
the eigenvalues of
$\tr_B \ket{x_i}_{AB}\bra{x_i}=X_i {X_i}^*$
are majorized by
the eigenvalues of
$\tr_B \ket{y_i}_{AB}\bra{y_i}=Y_i {Y_i}^*$,
Nielsen's result implies that $\ket{x_i}_{AB}$ can be transformed to $\ket{y_i}_{AB}$ individually using a two-way LOCC protocol (which can be a LO transformation as a special case).

On the other hand, it is obvious that no bipartite LO transformation exists to simultaneously map
$\ket{x_i}_{AB}$ to $\ket{y_i}_{AB}$ for $i=1,2$, since according to
Theorem~\ref{2.1}, we must have $U_1=U_2$ and $V_1=V_2$ for such a transformation to exist.
It is easily seen that this is not possible even if we extend the dimensions of $U_i$, $V_i$, and $R_i$.

\section{Additional Remarks}

Even in the classical case, finding a correlation matrix $M$ such that
$(x_i^*x_j) = M \circ (y_i^*y_j)$ is highly non-trivial.
If $(y_i^*y_j)$ has no zero entries, then $M$ is uniquely determined.
Clearly, if $M$ exists, then  $x_i^* x_j = 0$ whenever $y_i^*y_j = 0$.
However, in these zero positions, it is not easy to decide how to choose the corresponding
$(i,j)$ entry in $M$ so that $M$ is positive semi-definite. Such a problem is known as a completion problem in
matrix theory, and is very challenging unless the specified entries of $M$ have
some nice pattern (such as the chordal graph pattern~\cite{Laurent2001}). Nonetheless, one can use positive
semi-definite programming software to search for a solution for a given partial matrix
(i.e., a matrix for which only some of the entries are specified).

For our problem, one may search for unitary $U \in M_{mnpq}$ of the form
$U_1\otimes U_2$ with $U_1 \in M_{mp}$ and $U_2 \in M_{nq}$ such that
$$U \begin{bmatrix}\tilde x_1 & \cdots & \tilde x_k\cr 0  & \dots & 0\cr\end{bmatrix}
= 
\begin{bmatrix} \tilde{y}_1 & \cdots & \tilde{y}_k \end{bmatrix}
$$
as done in the proof of Theorem \ref{2.1}.
Alternatively, one may ignore the structure of $U$ and search for unit vectors
$\xi_1, \dots, \xi_k$ 
such that
$$(x_i^*x_j) = (\xi_i^*\xi_j) \otimes (y_i^*y_j)$$
and
$$U(E_{11} \otimes X_i)V = R_i \otimes Y_i, \qquad i = 1, \dots, k$$
as in Theorem \ref{2.1}(c).
Of course, finding $\xi_1, \dots, \xi_k$ in this way is not easy.


\section{Conclusions}

We proved necessary and sufficient conditions for the existence of a LO transformation between two sets of entangled states.
We also reduced the general problem for checking the existence of such LO transformation to smaller problems.
However, a general algorithm seems to be difficult to obtain.
On the other hand, if a LO transformation is not possible, one can easily use our theorems to rule out its existence in a computationally efficient manner.
In fact, from our theorems, one can see that it is a rather stringent condition for a LO transformation to exist, especially when the number of states to transform is large.
We hope that our results can shed some light on entanglement transformation and stimulate further investigation.
Open problems include (i) the transformability between sets of states using LOCC and separable operations and between sets of mixed states, (ii) probabilistic transformations where the final states are produced only probabilistically, and
(iii) allowing the initial states and final states to have different dimensions.

\section*{Acknowledgments}%
We thank Hoi-Kwong Lo for enlightening discussion, especially on prior works and interpretation of our result.

This research evolved in a faculty seminar on quantum information science
at the University of Hong Kong in the spring of 2012 co-ordinated by
Chau and Li. The support of the Departments of Physics and Mathematics 
of the University of Hong Kong is greatly appreciated.

Chau and Fung were partially supported by the Hong Kong RGC grant
No. 700709P. Li and Sze were partially supported by the Hong  Kong
RGC grant PolyU 502910; this grant supported the visit of Poon in
the spring of 2012.
Li was also supported by a USA NSF
grant;  he was a visiting professor of the  University of Hong Kong in the
spring of 2012,  an honorary professor of Taiyuan University of Technology
(100 Talent Program scholar), and an honorary  professor of  Shanghai
University.

\bibliographystyle{aipnum4-1}

\bibliography{qc57.3}

\end{document}